\pgfplotsset{compat=newest,unit code/.code={\si{#1}},plot coordinates/math parser=false,grid style={lightgray}, ylabel right/.style={
        after end axis/.append code={
            \node [rotate=90, anchor=north] at (rel axis cs:1,0.5) {#1};
        }   
    }}
\newcommand{\myitem}[1]{%
\item[#1]\protected@edef\@currentlabel{#1}%
}
\newtheorem{theo}{Theorem}
\newtheorem{lem}{Lemma}
\newtheorem{assume}{Assumption}
\newcommand{\fakepar}[1]{\vspace{1mm}\noindent\textbf{#1.}}
\DeclareSIUnit{\belmilliwatt}{Bm}
\DeclareSIUnit{\dBm}{\deci\belmilliwatt}
\DeclareMathOperator*{\E}{\mathbb{E}}
\newcommand{\norm}[1]{\lVert#1\rVert}
\newcommand{\abs}[1]{\left\lvert#1\right\rvert}
\newcommand*\diff{\mathop{}\!\mathrm{d}}
\DeclareMathOperator*{\R}{\mathbb{R}}
\DeclareMathOperator*{\argmax}{arg\,max}
\let\originalleft\left
\let\originalright\right
\renewcommand{\left}{\mathopen{}\mathclose\bgroup\originalleft}
\renewcommand{\right}{\aftergroup\egroup\originalright}
\newcommand\figref[1]{Fig.~\ref{#1}}
\newcommand{\eg}{e.g.,\xspace}
\newcommand{\ie}{i.e.,\xspace}
\newcommand{\cf}{cf.\@\xspace}
\newcommand{\capt}[1]{\mdseries{\emph{#1}}}
\title{\LARGE \bf
A computationally lightweight safe learning algorithm
}
\author{Dominik Baumann$^{1,2}$, Krzysztof Kowalczyk$^3$, Koen Tiels$^4$, and Pawe\l{} Wachel$^{3}$ 
\thanks{Supported by NWO VIDI 15698 and ECSEL 101007311 (IMOCO4.E).}
\thanks{$^{1}$Department of Electrical Engineering and Automation, Aalto University, Espoo, Finland
        {\tt\small dominik.baumann@aalto.fi}}%
\thanks{$^{2}$Department of Information Technology, Uppsala University, Uppsala, Sweden}%
\thanks{$^{3}$Department of Control Systems and Mechatronics, Wroc\l{}aw University of Science and Technology, Wroc\l{}aw, Poland
        {\tt\small \{krzysztof.kowalczyk, pawel.wachel\}@pwr.edu.pl}}%
\thanks{$^{4}$Department of Mechanical Engineering, Eindhoven University of Technology, Eindhoven, The Netherlands
        {\tt\small k.tiels@tue.nl}}%
}
\newcommand{\safeopt}{\textsc{SafeOpt}\xspace}
\newcommand{\colsafe}{\textsc{CoLSafe}\xspace}
\newcommand{\mytitle}{\textbf{Accepted final version.}
To appear in \textit{Proc. of the IEEE Conference on Decision and Control, 2023}.\\
\copyright 2023 IEEE. Personal use of this material is permitted. Permission
from IEEE must be obtained for all other uses, in any current or future
media, including reprinting/republishing this material for advertising or
promotional purposes, creating new collective works, for resale or
redistribution to servers or lists, or reuse of any copyrighted component of
this work in other works.}
\begin{document}

\maketitle
\thispagestyle{fancy}	
\pagestyle{empty}

\begin{abstract}
    Safety is an essential asset when learning control policies for physical systems, as violating safety constraints during training can lead to expensive hardware damage. 
    In response to this need, the field of safe learning has emerged with algorithms that can provide probabilistic safety guarantees without knowledge of the underlying system dynamics. 
    Those algorithms often rely on Gaussian process inference. 
    Unfortunately, Gaussian process inference scales cubically with the number of data points, limiting applicability to high-dimensional and embedded systems. 
    In this paper, we propose a safe learning algorithm that provides probabilistic safety guarantees but leverages the Nadaraya-Watson estimator instead of Gaussian processes. 
    For the Nadaraya-Watson estimator, we can reach logarithmic scaling with the number of data points. 
    We provide theoretical guarantees for the estimates, embed them into a safe learning algorithm, and show numerical experiments on a simulated seven-degrees-of-freedom robot manipulator.
\end{abstract}

\section{Introduction}
\label{sec:intro}

Data-driven or learning-based control approaches have seen tremendous successes over the last years~\cite{lillicrap2015continuous,duan2016benchmarking}.
Nevertheless, many popular machine algorithms, in particular those based on deep reinforcement learning, are challenging to apply to physical systems.
One major shortcoming is the lack of theoretical guarantees, especially during exploration.
Thus, deep reinforcement learning algorithms may violate safety constraints when applied to physical systems, potentially damaging the system and endangering the surroundings.
The field of \emph{safe learning} addresses this challenge.
Based on varying degrees of assumed knowledge about the system dynamics, safe learning algorithms seek to learn (sub)optimal policies for dynamical systems while ensuring or encouraging safety~\cite{brunke2022safe}.

In this paper, we focus on a class of algorithms that ensure safety during exploration with high probability while assuming that the concrete system dynamics are unknown.
In exchange for the guarantees, the algorithms assume that a safe initial policy is known -- otherwise, the algorithm would have no possibility to start the exploration safely -- and a certain degree of regularity of the (unknown) functions that represent the constraints.
A popular algorithm of this class is \safeopt~\cite{sui2015safe,berkenkamp2021bayesian}.
Starting from the safe initial policy, \safeopt carefully explores the policy space while only exploring policies that can be classified with high probability as safe given the regularity assumptions.
In particular, \safeopt uses Gaussian processes (GPs)~\cite{williams2006gaussian} to model the constraint functions.
While GPs are a powerful tool for function estimation, they are also a bottleneck of the approach: GP inference scales cubically with the number of data points.
This scaling property limits the applicability of such algorithms for many emerging applications.
For instance, if we want mobile robots that autonomously learn control policies, learning algorithms must be implemented on embedded devices with limited computing resources.
Furthermore, even when extensive computing resources are available, cubic scaling properties limit the potential to scale to high-dimensional systems.

To address this problem, we propose \colsafe, a computationally lightweight safe algorithm that adopts the main characteristics of \safeopt but uses the Nadaraya-Watson estimator~\cite{nadaraya1964estimating,watson1964smooth} instead of GPs to approximate the constraints.
The Nadaraya-Watson estimator scales logarithmically with the number of data points~\cite{rao1996pac}.
We show that we can derive similar safety guarantees as for \safeopt and compare both algorithms in numerical simulations of a robot arm.

\fakepar{Contributions}
In this paper, we 
\begin{itemize}
    \item propose \colsafe, a novel safe learning algorithm based on \safeopt with significantly lower computational complexity;
    \item derive probabilistic safety guarantees;
    \item evaluate the algorithm in numerical experiments.
\end{itemize}

\fakepar{Related work}
For a general introduction to and overview of the safe learning literature, we refer the reader to~\cite{brunke2022safe}.
Here, we focus on methods that provide probabilistic safety guarantees without knowledge or explicit learning of the system dynamics.
Starting from \safeopt~\cite{sui2015safe,berkenkamp2021bayesian}, various algorithms have been proposed that, for instance, try to enable global exploration~\cite{baumann2021gosafe,sukhija2023gosafeopt}, make the algorithm more adaptive~\cite{konig2021applied}, or increase the effectiveness of the exploration strategy~\cite{wachi2018safe}.
All these variations have in common that they rely on GP inference.
The limited scalability has also been addressed~\cite{duivenvoorden2017constrained,sui2018stagewise}.
Here, improvements in scalability are mainly connected to better handling the discretization of the parameter space required by the standard \safeopt algorithm, and the algorithms still rely on GP inference.
Thus, all of them could be combined with the estimator and bounds we propose in this work, lowering their computational complexity.
The Nadaraya-Watson estimator has been leveraged in prior control-related works, for instance, in system identification~\cite{schuster1979contributions,juditsky1995nonlinear,ljung2006some,mzyk2020wiener}.
Nevertheless, we are not aware of any prior work leveraging the Nadaraya-Watson estimator for safe learning.
\section{Problem setting}
\label{sec:problem}

We consider a dynamical system
\[
    \diff x(t) = z(x(t),u(t)) \diff t
\]
with state space $x(t)\in\mathcal{X}\subseteq \R^n$ and input $u(t)\in\R^m$.
For this system, we want to learn a policy $\pi$, parametrized by policy parameters $a\in\mathcal{A}\subseteq\R^d$ that determines the control input $u(t)$ given the current system state $x(t)$.
The quality of the policy is evaluated by an unknown reward function $f:\, \mathcal{A}\to\R$.
In general, we can solve these kinds of problems through reinforcement learning.
However, this would also mean that during exploration, we would allow for arbitrary actions which might lead to the violation of safety constraints.
We define safety in the form of constraint functions $g_i:\,\mathcal{A}\to\R$ that should always be above zero, with $i\in\mathcal{I}_\mathrm{g}=\{1,\ldots,q\}$. 
For instance, a constraint could be the distance of a robot to an obstacle.
We can then write the constrained optimization problem as
\begin{equation}
    \label{eqn:constr_max}
    \max_{a\in\mathcal{A}} f(a) \text{ subject to }g_i(a)\ge 0\text{ for all }i\in\mathcal{I}_\mathrm{g}.
\end{equation}
Solving~\eqref{eqn:constr_max} without knowledge about the system dynamics and reward and constraint functions is generally unfeasible.
Thus, we need to make some assumptions.
First, without any prior knowledge, we have no chance to choose parameters $a$ for a first experiment for which we can be certain that they are safe.
Thus, we assume that we have an initial set of safe parameters.
\begin{assume}
A set $S_0\subset\mathcal{A}$ of safe parameters is known. That is, for all parameters $a$ in $S_0$ we have $g_i(a)\geq 0$ for all $i\in\mathcal{I}_\mathrm{g}$ and $S_0\neq\varnothing$.
\label{ass:safe_seed}
\end{assume}
In robotics, for instance, we often have simulation models available.
As these models cannot perfectly capture the real world, they are insufficient to solve~\eqref{eqn:constr_max}.
However, they may provide us with a safe initial parametrization.
Generally, the set of initial parameters can have arbitrarily bad performance regarding the reward.
Thus, if we imagine a robot manipulator that shall reach a target without colliding with an obstacle, a trivial set of safe parameters would barely move the arm from its initial position.
While this would not solve the task, it would be sufficient as a starting point for our exploration.

Ultimately, we want to estimate $f$ and $g_i$ from data.
Thus, we require some measurements from both after we do experiments.
\begin{assume}
After each experiment, we receive measurements $\hat{f}(a) = f(a) + \omega_0$ and $\hat{g}_i(a) = g_i(a)+\omega_i$ for all $i\in\mathcal{I}_\mathrm{g}$, where $\omega_i$, with $i\in\{0,\ldots,q\}$, are independent and identically distributed $\sigma$-sub-Gaussian random variables.
\label{ass:observation_model}
\end{assume}
Lastly, we require a regularity assumption about the functions $f$ and $g_i$.
\begin{assume}
    The functions $f$ and $g_i$, for all $i\in\mathcal{I}_\mathrm{g}$, are Lipschitz-continuous with known Lipschitz constant $L<\infty$.
    \label{ass:smoothness_assumption}
\end{assume}
Assuming Lipschitz-continuity is relatively common in control and the safe learning literature.
Some approaches get around assuming to know the Lipschitz constant.
However, they often require knowledge of an upper bound of the norm of $f$ and $g_i$ in a reproducing kernel Hilbert space instead.
We argue that having an upper bound on the Lipschitz constant is more intuitive.
Generally, the Lipschitz constant can also be approximated from data~\cite{wood1996estimation}.

Note that we clearly could assume individual Lipschitz constants $L_i$.
However, for simplicity, we assume a common Lipschitz constant $L$, which would be the maximum of the individual $L_i$.
\section{Safe learning algorithm}
\label{sec:algo}

\colsafe mainly follows the structure of the popular \safeopt algorithm but replaces the GPs used to model reward and constraint functions with the Nadaraya-Watson estimator.
In the following section, we show that this estimator can provide similar guarantees as~\cite{berkenkamp2021bayesian} showed for GPs.
Then, we embed the estimates in the overall algorithm.

\subsection{Nadaraya-Watson estimator}
\label{sec:estimator}

At every iteration, we receive noisy measurements of the reward and constraint functions.
From these measurements, we seek to estimate the reward function $f$ and the constraint functions $g_i$ using the Nadaraya-Watson estimator.
For notational convenience, let us introduce a selector function
\begin{equation}
    \label{eqn:selector_fcn}
    h(a,i) := \begin{cases}f(a) & \text{ if }i=0\\
    g_i(a) & \text{ if }i\in\mathcal{I}_\mathrm{g},
    \end{cases}
\end{equation}
and the set $\mathcal{I}=\{0\}\cup\mathcal{I}_\mathrm{g}$.
Then, we can get estimates of $h$ at iteration $n$,
\begin{equation}
    \label{eqn:nadaraya-watson}
    \mu_{n}(a',i) := \sum_{t=1}^n\frac{K_\lambda(a', a_t)}{\kappa_n(a')}\hat{h}_t(a,i),
\end{equation}
where
\begin{align*}
    \kappa_n(a') &\coloneqq \sum_{t=1}^nK_\lambda(a',a_t),\\
    K_\lambda(a,a') &\coloneqq \frac{1}{c_K}K\left(\frac{\norm{a-a'}}{\lambda}\right),
\end{align*}
$K(\cdot)$ is the kernel function, $\lambda$ the bandwidth parameter, $c_\mathrm{K}$ a constant, and $\hat{h}_t(a,i)$ the measurements at iteration $t$.
The kernel function needs to meet the following assumption.
\begin{assume}
\label{ass:kernel}
The kernel $K:\,\R^d\to\R$ is positive definite and bounded such that for any $v\in\R^d$ and some $c_\mathrm{K}<\infty$, $0\le K(v)\le c_\mathrm{K}$, and $K(v)=0$ for all $\norm{v}>1$.
\end{assume}
The computation time of the Nadaraya-Watson estimator scales with $\mathcal{O}(n)$ with the number of data points for a naive implementation and with $\mathcal{O}(\log(n)^d)$ if we allow for some pre-processing~\cite[Thm.~4]{rao1996pac}.
Thus, it is significantly more efficient than using GPs whose computation time scales with $\mathcal{O}(n^3)$. 

Nevertheless, we require theoretical worst-case bounds if we seek to leverage the estimate~\eqref{eqn:nadaraya-watson} for safe learning.
Errors in~\eqref{eqn:nadaraya-watson} stem from trying to estimate an unknown function from finitely many data points and the measurement noise defined in Assumption~\ref{ass:observation_model}.
We start by deriving bounds induced by the noise.
The following lemma is a slight variation of~\cite[Lem.~1]{abbasi:2011}.
\begin{lem}
    \label{lem:bound_omega}
    Let $\{v_t:\,t\in\mathbb{N}\}$ be a bounded stochastic process and $\{\omega_t:\,t\in\mathbb{N}\}$ be an i.i.d.\ sub-Gaussian stochastic process, \ie there exists a $\sigma>0$ such that, for any $\gamma\in\R$, and any $t\in\mathbb{N}$
    \begin{equation}
    \label{eqn:sub_gaussian}
        \E[\exp(\gamma\omega_t)]\le\exp\left(\frac{\gamma^2\sigma^2}{2}\right).
    \end{equation}
    For any $\eta\in\R$, define
    \[
        \omega_n(\eta)\coloneqq \exp\left(\sum_{t=1}^n\frac{\eta\omega_tv_t}{\sigma}-\frac{1}{2}\eta^2v_t^2\right).
    \]
    Then, $\E[\omega_n(\eta)]\le 1$.
\end{lem}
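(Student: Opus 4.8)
The plan is to recognize $\omega_n(\eta)$ as a product of per-step factors and to show that this product is a nonnegative supermartingale with initial value one, exactly in the spirit of~\cite[Lem.~1]{abbasi:2011}. Concretely, I would let $\mathcal{F}_t \coloneqq \sigma(\omega_1,\dots,\omega_t)$ denote the natural filtration of the noise and read the boundedness hypothesis on $\{v_t\}$, together with its role in the estimator~\eqref{eqn:nadaraya-watson}, as the statement that each $v_t$ is predictable, i.e.\ $\mathcal{F}_{t-1}$-measurable (in the application $v_t$ is a kernel weight depending only on past queries, while $\omega_t$ is the fresh, independent measurement noise). Writing $D_t \coloneqq \exp\bigl(\tfrac{\eta\omega_t v_t}{\sigma} - \tfrac{1}{2}\eta^2 v_t^2\bigr)$, we then have $\omega_n(\eta) = \prod_{t=1}^n D_t$ and $\omega_0(\eta)=1$.

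The key step is the one-step conditional bound $\E[D_t \mid \mathcal{F}_{t-1}] \le 1$. Since $v_t$ is $\mathcal{F}_{t-1}$-measurable, the factor $\exp(-\tfrac{1}{2}\eta^2 v_t^2)$ pulls out of the conditional expectation, leaving $\E[\exp(\tfrac{\eta v_t}{\sigma}\omega_t)\mid\mathcal{F}_{t-1}]$. Because $\omega_t$ is \iid and hence independent of $\mathcal{F}_{t-1}$, I can apply the sub-Gaussian moment generating bound~\eqref{eqn:sub_gaussian} with $\gamma = \eta v_t/\sigma$ treated as a constant, which gives $\E[\exp(\tfrac{\eta v_t}{\sigma}\omega_t)\mid\mathcal{F}_{t-1}] \le \exp\bigl(\tfrac{1}{2}\eta^2 v_t^2\bigr)$. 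Multiplying by the pulled-out factor yields $\E[D_t\mid\mathcal{F}_{t-1}]\le 1$, and since $\omega_{n-1}(\eta)$ is $\mathcal{F}_{n-1}$-measurable, $\E[\omega_n(\eta)\mid\mathcal{F}_{n-1}] = \omega_{n-1}(\eta)\,\E[D_n\mid\mathcal{F}_{n-1}] \le \omega_{n-1}(\eta)$.

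From here the conclusion follows by the tower property and a short induction: taking expectations of the supermartingale inequality gives $\E[\omega_n(\eta)] \le \E[\omega_{n-1}(\eta)] \le \dots \le \E[\omega_0(\eta)] = 1$. The only real subtlety — and the main point I would take care to justify — is the predictability of $v_t$: the lemma merely calls $\{v_t\}$ a bounded stochastic process, but the argument needs $v_t$ to be known given the past so that $\eta v_t/\sigma$ acts as a deterministic multiplier inside the sub-Gaussian bound. I would therefore state explicitly the adaptedness convention (equivalently, independence of $\omega_t$ from $v_t$ and from the earlier noises) under which the lemma is used. Boundedness of $v_t$ itself is not needed for the inequality, but it guarantees that all the expectations involved are finite and well defined.
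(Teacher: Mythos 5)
Your proof is correct and takes essentially the same route as the paper's: the same factorization $\omega_n(\eta)=\prod_{t=1}^n D_t$, the same one-step bound $\E[D_t\mid\cdot\,]\le 1$ obtained by pulling out $\exp(-\tfrac{1}{2}\eta^2 v_t^2)$ and applying the sub-Gaussian moment condition with $\gamma=\eta v_t/\sigma$, and the same tower-property induction down to $\E[\omega_1(\eta)]\le 1$. If anything, your explicit filtration $\mathcal{F}_{t-1}$ and the predictability/independence convention you spell out make the argument tighter than the paper's, which conditions only on $v_n$ and pulls $D_1\cdots D_{n-1}$ out of that conditional expectation --- a step that is legitimate exactly under the adaptedness assumption you identify as the hidden subtlety.
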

\begin{proof}
    Let
    \[
        D_t\coloneqq \exp\left(\frac{\eta\omega_t v_t}{\sigma} - \frac{1}{2}\eta^2v_t^2\right).
    \]
    Clearly, $w_n=D_1D_2\ldots D_n$.
    Note, that
    \begin{align*}
        \E[D_t\mid v_t]&=\E\left[\frac{\exp\left(\frac{\eta\omega_t v_t}{\sigma}\right)}{\exp\left(\frac{1}{2}v_t^2\eta^2\right)}\,\middle\vert\, v_t\right]\\
        &=\frac{\E\left[\exp\left(\frac{\eta\omega_tv_t}{\sigma}\right)\mid v_t\right]}{\exp\left(\frac{1}{2}v_t^2\eta^2\right)}.
    \end{align*}
    Hence, due to~\eqref{eqn:sub_gaussian},
    \[
        \E[D_t\mid v_t] \le \frac{\exp\left(\frac{\left(\frac{\eta v_t}{\sigma}\right)^2\sigma^2}{2}\right)}{\exp\left(\frac{1}{2}v_t^2\eta^2\right)}=1.
    \]
    Next, for every $n\in\mathbb{N}$,
    \begin{align*}
        \E[\omega_n(\eta)\mid v_n]&=\E[D_1\ldots D_{n-1}D_n\mid v_n]\\
        &= D_1\ldots D_{n-1}\E[D_n\mid v_n]\le \omega_{n-1}(\eta).
    \end{align*}
    Therefore,
    \begin{align*}
        \E[\omega_n(\eta)]&=\E[\E[\omega_n(\eta)\mid v_n]]\\
        &\le\E[\omega_{n-1}]\le\ldots\le \E[\omega_1(\eta)]\\
        &=\E[\E[D_1\mid v_1]]\le 1,
    \end{align*}
    which completes the proof.
\end{proof}
With this result, we can now derive bounds for the summation of noise terms as in~\eqref{eqn:nadaraya-watson}.
The following result is a variation of~\cite[Thm.~3]{abbasi:2011}.
\begin{lem}
    \label{lem:sum_gauss}
    Consider $v_t$ and $\omega_t$ as in Lemma~\ref{lem:bound_omega}.
    Further, let $S_n\coloneqq\sum_{t=1}^nv_t\omega_t$ and $V_n\coloneqq\sum_{t=1}^nv_t^2$.
    Then, for any $n\in\mathbb{N}$ and $0<\delta<1$, with probability at least $1-\delta$,
    \[
        \abs{S_n} \le \sqrt{2\sigma^2\log(\delta{-1}\sqrt{1+V_n})(1+V_n)}.
    \]
\end{lem}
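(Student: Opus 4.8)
The plan is to apply the \emph{method of mixtures}, following the structure of~\cite[Thm.~3]{abbasi:2011}, with Lemma~\ref{lem:bound_omega} as the essential building block. First I would observe that the random variable appearing in Lemma~\ref{lem:bound_omega} can be written compactly in terms of $S_n$ and $V_n$, since $\sum_{t=1}^n \eta\omega_t v_t/\sigma = (\eta/\sigma)S_n$ and $\sum_{t=1}^n \eta^2 v_t^2 = \eta^2 V_n$, so that
\[
    \omega_n(\eta) = \exp\left(\frac{\eta}{\sigma}S_n - \frac{1}{2}\eta^2 V_n\right),
\]
and Lemma~\ref{lem:bound_omega} reads $\E[\omega_n(\eta)]\le 1$ for every fixed $\eta\in\R$. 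The conceptual difficulty is that this is a separate exponential bound for each $\eta$, whereas the goal is a single statement controlling $S_n$ relative to $V_n$; the method of mixtures resolves this by averaging over $\eta$.

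Concretely, I would introduce an auxiliary standard-normal weighting with density $p(\eta)=(2\pi)^{-1/2}\exp(-\eta^2/2)$ and define the mixture
\[
    M_n \coloneqq \int_{\R} \omega_n(\eta)\, p(\eta)\diff\eta.
\]
Because $\omega_n(\eta)\ge 0$, Tonelli's theorem justifies exchanging expectation and integration, giving $\E[M_n] = \int_{\R}\E[\omega_n(\eta)]\,p(\eta)\diff\eta \le 1$. I would then evaluate $M_n$ in closed form: inserting the expression for $\omega_n(\eta)$ collects a quadratic in $\eta$ in the exponent, and completing the square together with the standard Gaussian integral yields
\[
    M_n = \frac{1}{\sqrt{1+V_n}}\exp\left(\frac{S_n^2}{2\sigma^2(1+V_n)}\right).
\]
The choice of unit variance for $p$ is what produces the factor $1+V_n$ and hence matches the stated bound exactly.

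To finish, I would apply Markov's inequality to the nonnegative random variable $M_n$, obtaining $\Pr(M_n \ge \delta^{-1}) \le \delta\,\E[M_n] \le \delta$. Thus, with probability at least $1-\delta$, we have $M_n < \delta^{-1}$, i.e.
\[
    \frac{1}{\sqrt{1+V_n}}\exp\left(\frac{S_n^2}{2\sigma^2(1+V_n)}\right) < \frac{1}{\delta}.
\]
Taking logarithms and solving for $S_n^2$ moves $\sqrt{1+V_n}$ inside the logarithm and leaves $(1+V_n)$ as a multiplicative factor, which after taking the square root gives precisely $\abs{S_n} \le \sqrt{2\sigma^2\log(\delta^{-1}\sqrt{1+V_n})(1+V_n)}$.

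The main obstacle is bookkeeping rather than ideas: one must justify the interchange of expectation and integral (handled by nonnegativity via Tonelli), carry out the completing-the-square step without sign errors, and invert the resulting log-inequality carefully so the terms land in the stated positions. I expect no essential difficulty beyond this, since the fixed-$n$ claim only needs $\E[M_n]\le 1$ plus Markov. If instead a uniform-in-$n$ guarantee were required, I would note that $M_n$ is a nonnegative supermartingale and replace Markov's inequality by the corresponding maximal inequality; for the statement as written, the simpler argument suffices.
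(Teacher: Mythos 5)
Your proposal is correct and takes essentially the same route as the paper's proof: both are the method of mixtures with a standard Gaussian mixing density (the paper realizes the mixture as $\E[\omega_n(H)\mid v_t,\omega_t]$ for an independent $\mathcal{N}(0,1)$ variable $H$, which is exactly your integral $M_n$), both arrive at the closed form $\frac{1}{\sqrt{1+V_n}}\exp\left(\frac{S_n^2}{2\sigma^2(1+V_n)}\right)$ by completing the square, and both finish with Markov's inequality. The only cosmetic differences are that the paper normalizes $\sigma=1$ at the outset while you carry $\sigma$ explicitly, and that you justify $\E[M_n]\le 1$ via Tonelli rather than the tower property.
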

\begin{proof}
    Without loss of generality, let $\sigma=1$.
    For any $\eta\in\R$, let
    \[
        \omega_n(\eta)\coloneqq\exp\left(\eta S_n-\frac{1}{2}\eta^2V_n\right).
    \]
    From Lemma~\ref{lem:bound_omega}, we note that for any $\eta\in\R$, $\E[\omega_n(\eta)]\le 1$.
    Let now $H$ be a $\mathcal{N}(0,1)$ random variable, independent of all other variables.
    Clearly, $\E[\omega_n(H)\mid H]\le 1$.
    Define
    \[
        \omega_n\coloneqq \E[\omega_n(H)\mid v_t,\omega_t:\, t\in\mathbb{N}].
    \]
    Then, $\E[\omega_n]\le1$ since $\E[\omega_n]=\E[\E[\omega_n(H)\mid v_t,\omega_t:\, t\in\mathbb{N}]]=\E[\omega_n(H)]=\E[\E[\omega_n(H)\mid H]]\le 1$.
    We can also express $\omega_n$ directly as
    \begin{align*}
        \omega_n&=\frac{1}{\sqrt{2\pi}}\int \exp\left(\eta S_n-\frac{1}{2}\eta^2V_n\right)\exp\left(\frac{-\eta^2}{2}\right)\diff\eta\\
        &= \frac{1}{\sqrt{2\pi}}\int\exp\left(-\frac{1}{2}(V_n+1)\eta^2+S_n\eta\right)\diff\eta,
    \end{align*}
    which further gives
    \begin{align*}
        \omega_n &= \frac{1}{\sqrt{2\pi}}\int\exp\left(-\frac{1}{2}\frac{\left(\eta-\frac{1}{1+V_n}S_n\right)^2}{(1+V_n)^{-1}}\right)\exp\left(\frac{1}{2}\frac{S_n^2}{1+V_n}\right)\\
        &= \frac{1}{\sqrt{1+V_n}}\exp\left(\frac{1}{2}\frac{S_n^2}{1+V_n}\right)\int\frac{\sqrt{1+V_n}}{\sqrt{2\pi}}\\
        &\exp\left(-\frac{1}{2}\left(\frac{\eta-\frac{1}{1+V_n}S_n}{(1+V_n)^{-\frac{1}{2}}}\right)^2\right)\diff\eta\\
        &= \frac{1}{\sqrt{1+V_n}}\exp\left(\frac{1}{2}\frac{S_n^2}{1+V_n}\right).
    \end{align*}
    Therefore, $\mathbb{P}[\delta\omega_n\ge 1]$ equals
    \begin{align*}
        &\mathbb{P}\left[\frac{\delta}{\sqrt{1+V_n}}\exp\left(\frac{1}{2}\frac{S_n^2}{1+V_n}\right)\ge 1\right]\\
        =&\mathbb{P}\left[\exp\left(\frac{1}{2}\frac{S_n^2}{1+V_n}\right)\ge \frac{\sqrt{1+V_n}}{\delta}\right]\\
        =&\mathbb{P}\left[\frac{S_n^2}{1+V_n}\ge 2\log\left(\frac{\sqrt{1+V_n}}{\delta}\right)\right]\\
        =&\mathbb{P}\left[S_n^2\ge2\log\left(\frac{\sqrt{1+V_n}}{\delta}\right)(1+V_n)\right].
    \end{align*}
    Recall now that $\E[\omega_n]\le1$.
    Hence, due to Markov's inequality, we have
    \[
        \mathbb{P}[\delta\omega_n\ge 1]\le\delta\E[\omega_n]\le\delta,
    \]
    which completes the proof.
\end{proof}
With this, we can provide the required bounds (\cf\ \cite{wachel2022decentralized}).
\begin{lem}
\label{lem:bounds}
Under Assumptions~\ref{ass:observation_model}--\ref{ass:kernel}, we have, for all $n\ge0$, $a'\in\mathcal{A}$, and all $i\in\mathcal{I}$, with probability at least $1-\delta$,
\[
    \abs{h(a',i) - \mu_n(a',i)} < \beta_n(a',i),
\]
where
\[
\beta_{n}(a',i) \coloneqq L \lambda+2\sigma \frac{\alpha _{n}(a' ,\delta )}
{\kappa _{n}(a')},
\]
and 
\begin{equation*}
\alpha _{n}( a' ,\delta ) \coloneqq
\begin{cases} 
\sqrt{\log \left( \frac{\sqrt{2}}{\delta} \right)}, & \textnormal{if } \kappa _{n}(a') \leq 1 \vspace{3pt}\\ 
\sqrt{\kappa _{n}(a') \log \left( \frac{\sqrt{1+\kappa _{n}(a')}}{\delta} \right)}, & \textnormal{if } \kappa _{n}(a') > 1.
\end{cases}
\end{equation*}
\end{lem}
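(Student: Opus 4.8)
The plan is to split the estimation error into a deterministic \emph{bias} term, arising from approximating $h(\cdot,i)$ by a weighted average of neighbouring samples, and a stochastic \emph{noise} term, and to bound each separately. The starting observation is that the Nadaraya-Watson weights $K_\lambda(a',a_t)/\kappa_n(a')$ sum to one by the definition of $\kappa_n$. Writing each measurement as $\hat h_t(a_t,i)=h(a_t,i)+\omega_{i,t}$, I would therefore rewrite
\[
h(a',i)-\mu_n(a',i)=\sum_{t=1}^n\frac{K_\lambda(a',a_t)}{\kappa_n(a')}\bigl(h(a',i)-h(a_t,i)\bigr)-\frac{1}{\kappa_n(a')}\sum_{t=1}^nK_\lambda(a',a_t)\,\omega_{i,t},
\]
and then apply the triangle inequality to treat the two sums independently.

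For the bias term, the key is Assumption~\ref{ass:kernel}: since $K(v)=0$ whenever $\norm{v}>1$, a summand is nonzero only if $\norm{a'-a_t}\le\lambda$. On the support of the kernel, Lipschitz continuity (Assumption~\ref{ass:smoothness_assumption}) then gives $\abs{h(a',i)-h(a_t,i)}\le L\norm{a'-a_t}\le L\lambda$. Because the weights are nonnegative and sum to one, the weighted average of these terms is bounded by $L\lambda$, which is exactly the first contribution to $\beta_n$.

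For the noise term, I would invoke Lemma~\ref{lem:sum_gauss} with the identification $v_t=K_\lambda(a',a_t)$ and $\omega_t=\omega_{i,t}$; the weights are bounded and the $\omega_{i,t}$ are \iid $\sigma$-sub-Gaussian by Assumption~\ref{ass:observation_model}, so the hypotheses hold. With $S_n=\sum_t K_\lambda(a',a_t)\omega_{i,t}$ and $V_n=\sum_t K_\lambda(a',a_t)^2$, Lemma~\ref{lem:sum_gauss} yields, with probability at least $1-\delta$, a bound on $\abs{S_n}$; dividing by $\kappa_n(a')$ gives a noise contribution of the form $\sqrt{2\sigma^2\log(\delta^{-1}\sqrt{1+V_n})(1+V_n)}\,/\,\kappa_n(a')$.

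The main work is then to show this is at most $2\sigma\,\alpha_n(a',\delta)/\kappa_n(a')$, \ie that $\tfrac12(1+V_n)\log(\delta^{-1}\sqrt{1+V_n})\le\alpha_n(a',\delta)^2$. Here I would use the crucial inequality $V_n\le\kappa_n(a')$, which follows because $0\le K_\lambda\le1$ (a consequence of $0\le K\le c_\mathrm{K}$ in Assumption~\ref{ass:kernel}) forces $K_\lambda^2\le K_\lambda$ termwise. The case split defining $\alpha_n$ then falls out: when $\kappa_n(a')\le1$ we have $1+V_n\le2$, so $\tfrac12(1+V_n)\le1$ and $\log(\delta^{-1}\sqrt{1+V_n})\le\log(\sqrt2/\delta)$; when $\kappa_n(a')>1$ we have $1+V_n\le 1+\kappa_n(a')<2\kappa_n(a')$, so $\tfrac12(1+V_n)<\kappa_n(a')$ and $\log(\delta^{-1}\sqrt{1+V_n})\le\log(\sqrt{1+\kappa_n(a')}/\delta)$. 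Adding the bias and noise bounds produces $\beta_n(a',i)$. I expect the delicate point to be exactly this algebraic matching, namely extracting the clean two-branch form of $\alpha_n$ from the $(1+V_n)$-dependence of Lemma~\ref{lem:sum_gauss} via $V_n\le\kappa_n(a')$, together with the probabilistic bookkeeping: Lemma~\ref{lem:sum_gauss} delivers the statement at a fixed index, so a union bound over the finitely many $i\in\mathcal{I}$ is needed to make the conclusion hold simultaneously for all $i$.
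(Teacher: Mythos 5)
Your proposal is correct and follows essentially the same route as the paper's own proof: the identical bias/noise decomposition using that the weights $K_\lambda(a',a_t)/\kappa_n(a')$ sum to one, the same $L\lambda$ bias bound via the kernel's compact support plus Lipschitz continuity, the same invocation of Lemma~\ref{lem:sum_gauss} with $v_t=K_\lambda(a',a_t)$, and the same reduction $V_n\le\kappa_n(a')$ (from $K_\lambda\le1$) followed by the case split on $\kappa_n(a')\lessgtr1$. Your closing remark that a union bound over $i\in\mathcal{I}$ is needed for simultaneity is a point of care the paper itself glosses over (its proof fixes $a'$, $i$, and $n$ when applying Lemma~\ref{lem:sum_gauss}), but it does not change the approach.
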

\begin{proof}
    Due to Assumption~\ref{ass:kernel}, we have $\frac{1}{c_\mathrm{K}}K(\cdot)\le 1$.
    Hence, without loss of generality, we assume in the following that $K_\lambda$ is bounded by 1.
    Following Assumption~\ref{ass:observation_model}, we further have (\cf~\cite{dean2021certainty})
    \begin{align}
    \begin{split}
        \label{eqn:bounds_1}
        &\abs{\sum_{t=1}^n\frac{K_\lambda(a',a_t)\hat{h}_t(a,i)}{\kappa_n(a')}\hat{h}_t(a,i) - h(a',i)}\\
        \le &\sum_{t=1}^n\theta_t\abs{h(a_t, i)-h(a',i)} + \abs{\sum_{t=1}^n\theta_t\omega_t},
    \end{split}
    \end{align}
    where $\theta_t\coloneqq \frac{K_\lambda(a',a_t)}{\kappa_n(a')}$ and $\omega_t$ the measurement noise at iteration $t$.
    Note, that $\sum_{t=1}^n\theta_t=1$.
    Due to Assumption~\ref{ass:kernel}, if $K_\lambda(a',a_t)>0$, then $\frac{\norm{a'-a_t}}{\lambda}\le 1$.
    Therefore, if $K_\lambda(a',a_t)>0$, \cf~Assumption~\ref{ass:smoothness_assumption},
    \[
        \abs{h(a_t,i)-h(a',i)} \le L\norm{a'-a_t}\le L\lambda,
    \]
    and since the weights $\theta_t$ sum to 1,
    \[
        \sum_{t=1}^n\theta_t\abs{h(a_t,i) - h(a',i)} \le L\lambda.
    \]
    Finally, for the noise term in~\eqref{eqn:bounds_1}, observe that
    \[
        \abs{\sum_{t=1}^n\theta_t\omega_t} = \frac{1}{\kappa_n(a')}\abs{\sum_{t=1}^nK_\lambda(a',a_t)\omega_t}.
    \]
    According to Lemma~\ref{lem:sum_gauss}, this term is upper bounded, with probability $1-\delta$, by
    \begin{align*}
        &\frac{1}{\kappa_n(a')}\sigma\times\\
        &\sqrt{2\log\left(\delta^{-1}\sqrt{1+\sum_{t=1}^nK_\lambda^2(a',a_t)}\right)\left(1+\sum_{t=1}^nK_\lambda^2(a',a_t)\right)}.
    \end{align*}
    Furthermore, since $K_\lambda(a',a_t)\le 1$ (\cf~Assumption~\ref{ass:kernel}), we obtain
    \begin{align*}
        &\frac{1}{\kappa_n(a')}\abs{\sum_{t=1}^nK_\lambda(a',a_t)\omega_t}\\
        \le &\sigma\sqrt{2\log(\delta^{-1}\sqrt{1+\kappa_n(a')})}\frac{\sqrt{1+\kappa_n(a')}}{\kappa_n(a')}.
    \end{align*}
    Observe next that, if $\kappa_n(a') > 1$,
    \[
        \frac{\sqrt{1+\kappa_n(a')}}{\kappa_n(a')}<\frac{\sqrt{2\kappa_n(a')}}{\kappa_n(a')}=\frac{\sqrt{2}}{\sqrt{\kappa_n(a')}}.
    \]
    Therefore, with probability at least $1-\delta$, for $\kappa_n(a')>1$,
    \begin{align*}
        &\frac{1}{\kappa_n(a')}\abs{\sum_{t=1}^nK_\lambda(a',a_t)\omega_t}\\
        \le &\frac{2\sigma}{\kappa_n(a')}\sqrt{\kappa_n(a')\log(\delta^{-1}\sqrt{1+\kappa_n(a')}},
    \end{align*}
    whereas for $0<\kappa_n(a')\le 1$,
    \begin{align*}
        &\frac{1}{\kappa_n(a')}\abs{\sum_{t=1}^nK_\lambda(a',a_t)\omega_t}\\
        \le &\frac{\sigma}{\kappa_n(a')}\sqrt{2\log(\delta^{-1}\sqrt{1+\kappa_n(a')})}\sqrt{1+\kappa_n(a')}\\
        \le &\frac{2\sigma}{\kappa_n(a')}\sqrt{\log\left(\frac{\sqrt{2}}{\delta}\right)},
    \end{align*}
    which completes the proof.
\end{proof}
With these bounds, we can next present the safe learning algorithm.

\subsection{The algorithm}
\label{sec:alg_overview}

Starting from the initial safe seed given by Assumption~\ref{ass:safe_seed}, we can start a first experiment and receive a measurement of reward and constraint functions.
Following~\cite {berkenkamp2021bayesian}, we then leverage the functions' Lipschitz-continuity to generate a set of parameters $a$ that is safe with high probability.
For this, we first construct confidence intervals as
\begin{equation}
    \label{eqn:confidence_interv}
    Q_n(a, i) = [\mu_{n-1}(a, i)\pm\beta_{n-1}(a,i)].
\end{equation}
As the \safeopt algorithm requires that the safe set does not shrink, we further define the contained set as
\begin{equation}
    \label{eqn:contained}
    C_n(a,i)=C_{n-1}\cap Q_n(a,i),
\end{equation}
where $C_0(a,i)$ takes values in $[0,\infty)$ for all $a\in S_0$ and takes values in $\R$ for all $a\in\mathcal{A}\setminus S_0$.
This enables us to define lower and upper bounds as $l_n(a,i)\coloneqq\min C_n(a,i)$ and $u_n(a,i)\coloneqq\max C_n(a,i)$ with which we can update the safe set:
\begin{equation}
    \label{eqn:safe_set_update}
    S_n = \bigcap_{i\in\mathcal{I}_g}\bigcup_{a\in S_{n-1}}\{a'\in\mathcal{A}\mid l_n(a,i)-L\norm{a-a'}\ge 0\}.
\end{equation}
By sampling only from this set, we can guarantee that each experiment will be safe with high probability.
However, we also seek to optimize the policy. 
Thus, for a meaningful exploration strategy, we define two additional sets~\cite{berkenkamp2021bayesian}: parameters that are likely to yield a higher reward than our current optimum (potential maximizers, $M_n$), and parameters that are likely to enlarge $S_n$ (potential expanders, $G_n$).
Formally, we define them as
\begin{align}
    \label{eqn:maximizers}
    M_n&\coloneqq \{a\in S_n\mid u_n(a,0)\ge \max_{a'\in S_n}l_n(a',0)\}\\
    \label{eqn:expanders}
    G_n&\coloneqq \{a\in S_n\mid e_n(a) >0\},
\end{align}
with 
\begin{align*}
    e_n(a)\coloneqq |\{a'\in\mathcal{A}\setminus S_n\mid&\exists i\in\mathcal{I}_\mathrm{g}:\\
    &u_n(a,i)-L\norm{a-a'}\ge 0\}|.
\end{align*}
Given those sets, we select our next sample location as
\begin{equation}
    \label{eqn:next_sample}
    a_n = \argmax_{a\in M_n\cup G_n}\max_{i\in\mathcal{I}}w_n(a,i),
\end{equation}
with $w_n(a,i) = u_n(a,i) - l_n(a,i)$, which is basically a variant of the upper confidence bound algorithm~\cite{srinivas2012gaussian}.
Further, we can, at any iteration, obtain an estimate of the optimum through
\begin{equation}
    \label{eqn:optimum}
    \hat{a}_n = \argmax_{a\in S_n}l_n(a,0).
\end{equation}
The entire algorithm is summarized in Algorithm~\ref{alg:diffopt}.

Lastly, we provide the overall safety guarantees of \colsafe.
For these guarantees, we assume a finite parameter set $\mathcal{A}$.
However, heuristic extensions to continuous domains exist~\cite{duivenvoorden2017constrained}.
\begin{theo}
\label{thm:safety}
Under Assumptions~\ref{ass:safe_seed}--\ref{ass:kernel}, when following Algorithm~\ref{alg:diffopt}, we have for all $n\ge 0$, with probability at least $1-\delta$, that $g_i(a_n)\ge 0$ for all $i\in\mathcal{I}_g$, \ie the algorithm is safe.
\end{theo}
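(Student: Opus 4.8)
The plan is to decouple the statement into a probabilistic part and a deterministic part. The probabilistic part fixes a single high-probability event on which all confidence intervals are valid, and the deterministic part is an induction on $n$ showing that on this event every safe set $S_n$ contains only truly safe parameters. Since the sampling rule~\eqref{eqn:next_sample} selects $a_n\in M_n\cup G_n$, and both~\eqref{eqn:maximizers} and~\eqref{eqn:expanders} define subsets of $S_n$, safety of $S_n$ immediately yields $g_i(a_n)\ge 0$ for all $i\in\mathcal{I}_\mathrm{g}$, which is the claim.

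To construct the good event, I would invoke Lemma~\ref{lem:bounds}. For each fixed pair $(a',i)$, the self-normalized bound underlying Lemma~\ref{lem:sum_gauss} holds simultaneously for all $n$, so Lemma~\ref{lem:bounds} controls $\abs{h(a',i)-\mu_n(a',i)}$ uniformly in $n$ with probability at least $1-\delta$. Since $\mathcal{A}$ is finite (as assumed for the guarantee) and $\mathcal{I}$ is finite, a union bound over the $\abs{\mathcal{A}}\abs{\mathcal{I}}$ pairs, rescaling the per-pair failure probability accordingly, produces an event $E$ with $\mathbb{P}[E]\ge 1-\delta$ on which $h(a',i)\in Q_n(a',i)$ for every $n$, $a'$, and $i$, with $Q_n$ as in~\eqref{eqn:confidence_interv}. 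On $E$ the lower and upper bounds are then genuinely valid: the contained set~\eqref{eqn:contained} is a shrinking intersection $C_n=C_{n-1}\cap Q_n$ whose initialization already contains the truth --- for $a\in S_0$ Assumption~\ref{ass:safe_seed} gives $g_i(a)\ge 0$, hence $h(a,i)\in[0,\infty)=C_0(a,i)$, while for $a\notin S_0$ we have $C_0(a,i)=\R$ --- so induction yields $h(a,i)\in C_n(a,i)$ and thus $l_n(a,i)\le h(a,i)\le u_n(a,i)$ for all $n$.

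The deterministic induction then closes the argument on $E$. The base case is Assumption~\ref{ass:safe_seed}: $S_0$ is safe. Assuming $S_{n-1}$ is safe, take any $a'\in S_n$. By the update rule~\eqref{eqn:safe_set_update}, for each $i\in\mathcal{I}_\mathrm{g}$ there exists $a\in S_{n-1}$ with $l_n(a,i)-L\norm{a-a'}\ge 0$. Combining the valid lower bound $l_n(a,i)\le g_i(a)$ with the Lipschitz estimate of Assumption~\ref{ass:smoothness_assumption}, namely $g_i(a')\ge g_i(a)-L\norm{a-a'}$, yields
\[
    g_i(a')\ge g_i(a)-L\norm{a-a'}\ge l_n(a,i)-L\norm{a-a'}\ge 0,
\]
so $a'$ is safe and hence $S_n$ is safe. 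As $a_n\in S_n$, this gives $g_i(a_n)\ge 0$ for all $i\in\mathcal{I}_\mathrm{g}$ and all $n$, on the event $E$ of probability at least $1-\delta$.

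I expect the only genuine obstacle to be the probabilistic step: Lemma~\ref{lem:bounds} certifies a single fixed query point, whereas safety must hold across the entire adaptively generated trajectory and all parameters simultaneously. The finiteness of $\mathcal{A}$ together with the uniform-in-$n$ nature of the martingale bound are exactly what make the union bound affordable; once valid confidence bounds are secured, the remaining induction is purely deterministic and mirrors the standard \safeopt argument.
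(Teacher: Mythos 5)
Your proposal is correct and takes essentially the same route as the paper: the paper's proof consists of observing that Lemma~\ref{lem:bounds} yields confidence bounds equivalent to \cite[Lem.~1]{berkenkamp2021bayesian} and then invoking \cite[Lem.~11]{berkenkamp2021bayesian}, whose content is precisely your argument---a single high-probability event on which all confidence intervals are valid, followed by the deterministic Lipschitz induction over the safe-set updates, with safety of $a_n$ following because $M_n\cup G_n\subseteq S_n$. Your write-up merely makes explicit what the citation hides, namely the union bound over the finite set $\mathcal{A}\times\mathcal{I}$ (with the per-pair failure probability rescaled) and the uniformity in $n$ of the self-normalized martingale bound.
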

\begin{proof}
Lemma~\ref{lem:bounds} provides equivalent bounds to the ones derived in~\cite[Lem.~1]{berkenkamp2021bayesian}.
Thus, the proof follows from \cite[Lem.~11]{berkenkamp2021bayesian}.
\end{proof}

\begin{algorithm}
\small 
\caption{Pseudocode of \colsafe.}
\label{alg:diffopt}
\begin{algorithmic}[1]
\State \textbf{Input:} Domain $\mathcal{A}$, Safe seed $S_0$, Lipschitz constant $L$
\For {$n=1,2,\ldots$} 
\State Update safe set with~\eqref{eqn:safe_set_update}
\State Update set of potential maximizers with~\eqref{eqn:maximizers}
\State Update set of potential expanders with~\eqref{eqn:expanders}
\State Select $a_n$ with~\eqref{eqn:next_sample}
\State Receive measurements $\hat{f}(a_n)$, $\hat{g}_i(a_n)$, for all $i\in\mathcal{I}_\mathrm{g}$
\State Update Nadaraya-Watson estimator~\eqref{eqn:nadaraya-watson} with new data
\EndFor
\Return Best guess~\eqref{eqn:optimum}
\end{algorithmic}
\end{algorithm}
\section{Numerical experiments}
\label{sec:eval}

In the evaluation, we seek to demonstrate that \colsafe is generally capable of learning control policies for dynamical systems and that its computations are more lightweight than those of \safeopt.
Thus, we take a simulation model that has been used in earlier work on \safeopt~\cite{sukhija2023gosafeopt}, run both \colsafe and \safeopt, and compare the results.
In particular, both algorithms are supposed to learn a control policy for a simulation model of a seven-degrees-of-freedom Franka robot arm.
The goal is to let the arm of the robot reach a desired set point without colliding with an obstacle.
We consider a feedback linearization approach based on an approximate system model and design a linear quadratic regulator (LQR) for the then approximately linear system.
Since the system model is not perfect, the LQR does not achieve optimal performance.
However, it can provide us with a safe seed.
Starting from there, we seek to optimize the cost matrices of the LQR to compensate for the model mismatch.
The parameter space for this experiment is $d=2$.
For further details on the setup, we refer the reader to~\cite{sukhija2023gosafeopt}.

For \safeopt, we adopt the parameter settings from~\cite{sukhija2023gosafeopt}, \ie we use a Mat\'{e}rn kernel with $\nu=1.5$.
For \colsafe, we use the same kernel with a length scale of 0.1, set the bandwidth parameter $\lambda=0.5$, and the Lipschitz constant $L=1.75$.
To meet Assumption~\ref{ass:kernel}, we only use the output of the kernel for the Nadaraya-Watson estimator if $\norm{a-a'}<1$; else, we set it to 0.

We run \safeopt and \colsafe and compare the results after 410 episodes in \figref{fig:eval_safe_set}.
Both algorithms expand the safe set beyond the initial region and find better controller parameters.
\safeopt can explore a larger part of the state space within the 410 episodes, \ie \colsafe is more conservative.
However, \safeopt requires more time for each individual optimization step.
We show this in \figref{fig:time_comparison}.
We measure the time for updating the safe set and suggesting the next policy parameters for both algorithms.
Especially during later iterations, when the data set is larger, \safeopt requires significantly more time for each update step.
In particular, in the last iteration, \safeopt needs more than \SI{1}{\hour} to suggest the next sample point, while \colsafe needs only \SI{12.5}{\second}. 
The times were measured on a standard laptop.

\begin{figure}
    \centering
    \subfloat[\colsafe after 410 episodes.\label{sfig:our_400}]{\includegraphics[width=0.24\textwidth]{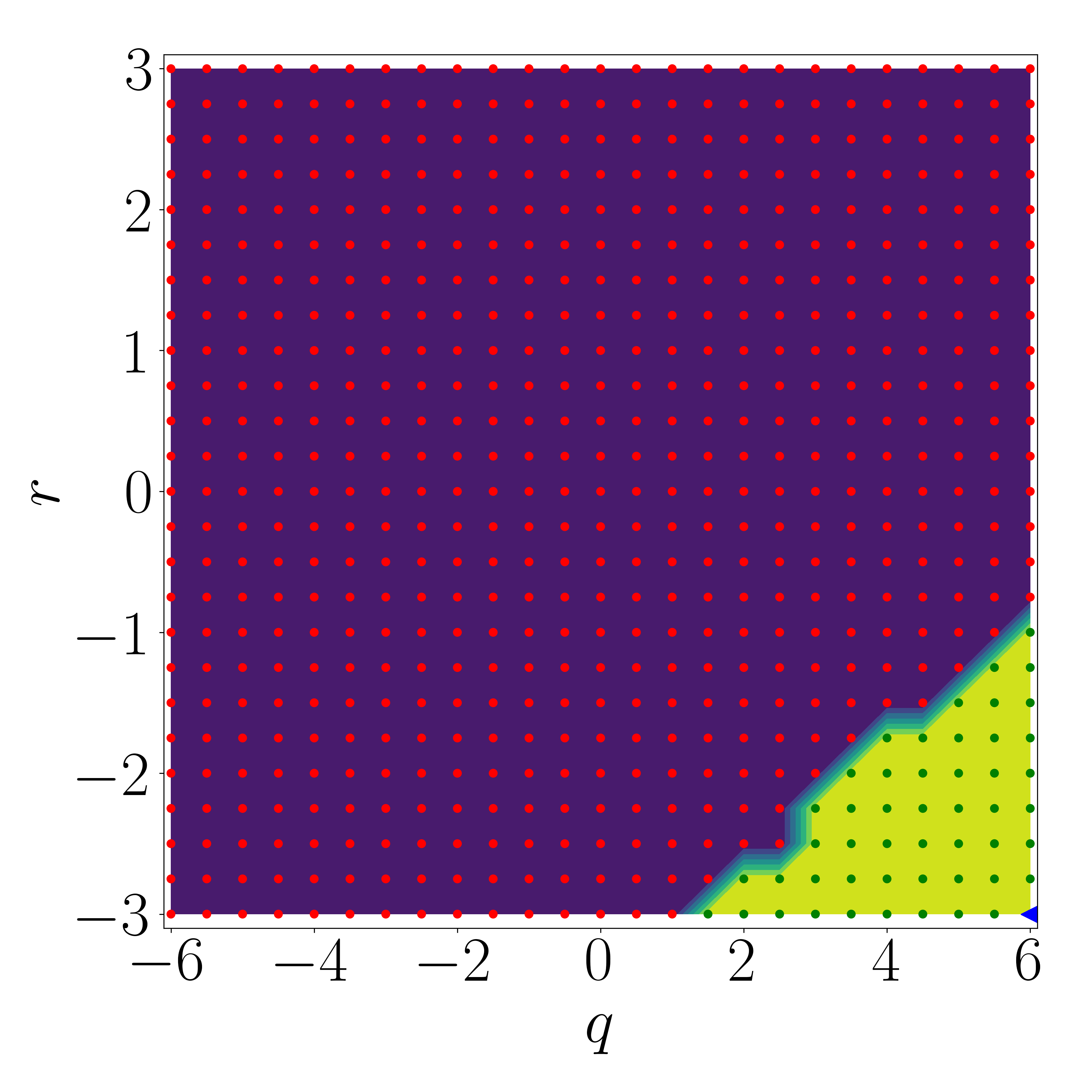}}
    \subfloat[\safeopt after 410 episodes.\label{sfig:safeopt_400}]{\includegraphics[width=0.24\textwidth]{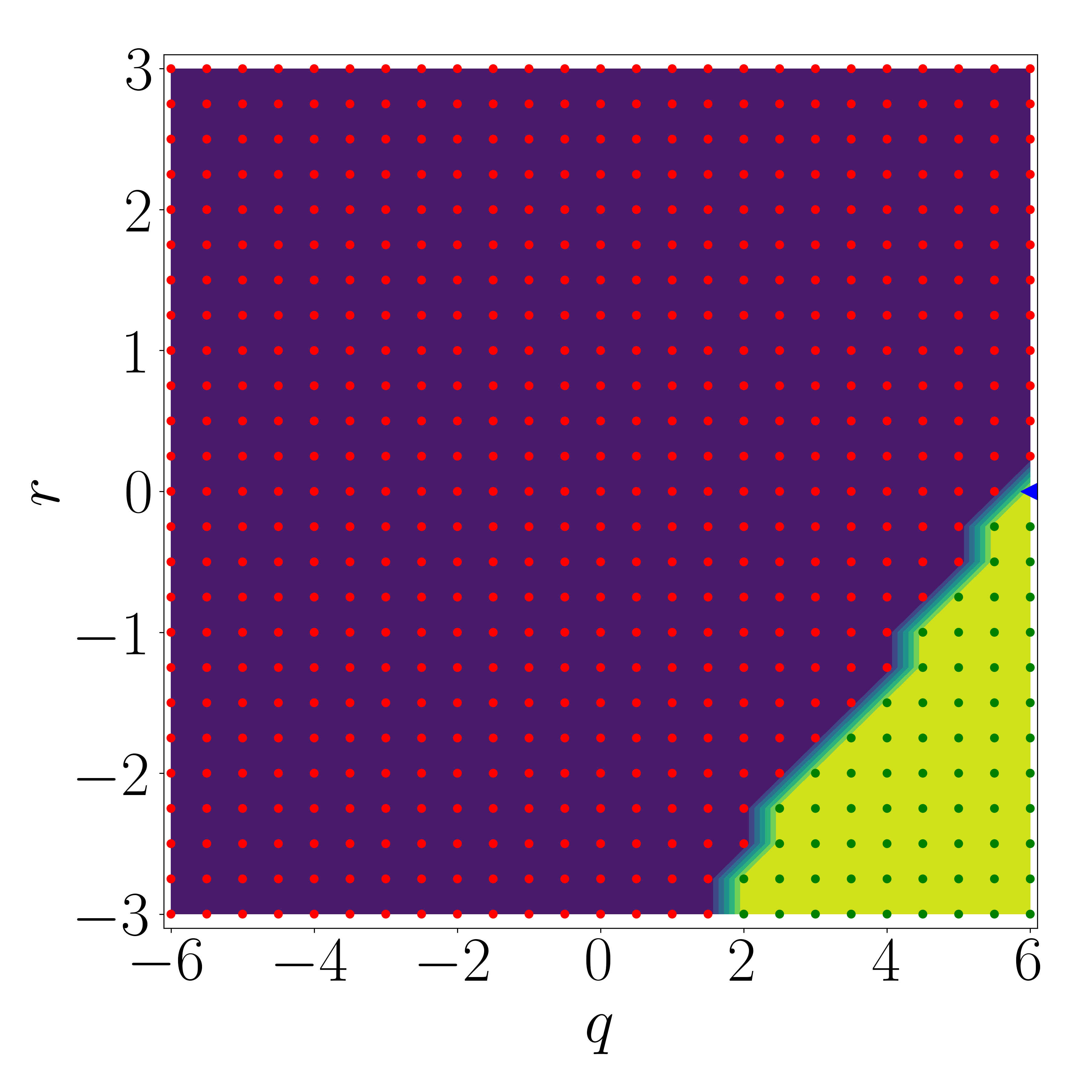}}
    \caption{Performance of \colsafe and \safeopt on the simulated Franka robot. \capt{Both algorithms can similarly explore the safe region while the computational footprint of \colsafe is significantly lower. The yellow regions mark the safe set, the blue triangle the current optimum, and $q$ and $r$ are the tuning parameters of the controller.}}
    \label{fig:eval_safe_set}
\end{figure}

\begin{figure}
    \centering
    \input{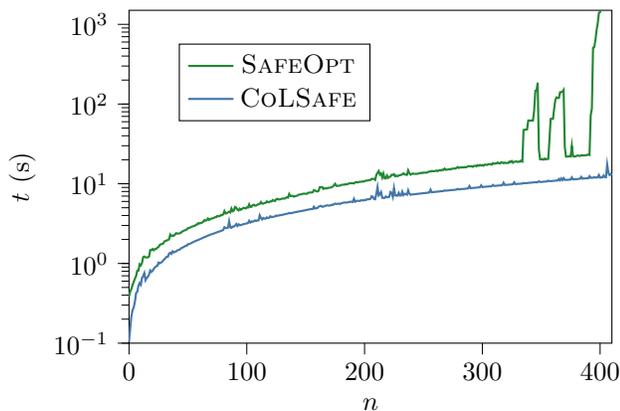}
    \caption{Time complexity of \safeopt and \colsafe. \capt{We measure the time $t$ for updating the safe set and suggesting the next candidate point in each iteration $n$. For higher iterations, \ie more data, \safeopt requires significantly more time. Note the logarithmic scaling of the $y$-axis.}}
    \label{fig:time_comparison}
\end{figure}

This evaluation suggests an interesting trade-off.
While the computations for \colsafe are computationally cheaper, the bounds are more conservative, and thus, only a smaller safe set can be explored.
That is, when computing resources are not a concern, and the dimensionality and expected number of data samples are relatively low, but we need to explore the largest possible safe set to increase the chance of finding the global optimum, \safeopt is the better choice.
However, for higher dimensional systems and restricted computing resources, \eg when learning policies on embedded devices, \colsafe is superior.
\section{Conclusions}
\label{sec:conclusions}

We propose a novel algorithm for safe exploration in reinforcement learning.
Under assumptions that are comparable to other safe learning algorithms, such as the popular \safeopt algorithm, we can provide similar high-probability safety guarantees.
However, our algorithm has significantly lower computational complexity, simplifying its use on embedded devices and for high-dimensional systems.
We also see that we trade this lower computational complexity for a more conservative exploration strategy.
Thus, in practice, it depends on the application and the available computational resources which algorithm is preferable.

Herein, we basically adopted the \safeopt algorithm and replaced the Gaussian process estimates with the Nadaraya-Watson estimator. 
For future work, it might be interesting to combine those estimates also with more recent advances of \safeopt such as~\cite{konig2021applied,duivenvoorden2017constrained,sui2018stagewise,sukhija2023gosafeopt}.

Further, we here considered a constant bandwidth parameter $\lambda$.
This parameter could also be tuned to make the algorithm less conservative, and that way, potentially come closer to the performance of \safeopt under the same amount of samples.

\addtolength{\textheight}{-12cm}   



\section*{Acknowledgements}

The authors would like to thank Bhavya Sukhija for support with the robot simulations.

\addtolength{\textheight}{+12cm}

\bibliographystyle{IEEEtran}
\bibliography{IEEEabrv,ref}

\begin{thebibliography}{10}
\providecommand{\url}[1]{#1}
\csname url@rmstyle\endcsname
\providecommand{\newblock}{\relax}
\providecommand{\bibinfo}[2]{#2}
\providecommand\BIBentrySTDinterwordspacing{\spaceskip=0pt\relax}
\providecommand\BIBentryALTinterwordstretchfactor{4}
\providecommand\BIBentryALTinterwordspacing{\spaceskip=\fontdimen2\font plus
\BIBentryALTinterwordstretchfactor\fontdimen3\font minus
  \fontdimen4\font\relax}
\providecommand\BIBforeignlanguage[2]{{%
\expandafter\ifx\csname l@#1\endcsname\relax
\typeout{** WARNING: IEEEtran.bst: No hyphenation pattern has been}%
\typeout{** loaded for the language `#1'. Using the pattern for}%
\typeout{** the default language instead.}%
\else
\language=\csname l@#1\endcsname
\fi
#2}}

\bibitem{lillicrap2015continuous}
T.~P. Lillicrap, J.~J. Hunt, A.~Pritzel, N.~Heess, T.~Erez, Y.~Tassa,
  D.~Silver, and D.~Wierstra, ``Continuous control with deep reinforcement
  learning,'' \emph{arXiv preprint arXiv:1509.02971}, 2015.

\bibitem{duan2016benchmarking}
Y.~Duan, X.~Chen, R.~Houthooft, J.~Schulman, and P.~Abbeel, ``Benchmarking deep
  reinforcement learning for continuous control,'' in \emph{International
  Conference on Machine Learning}, 2016, pp. 1329--1338.

\bibitem{brunke2022safe}
L.~Brunke, M.~Greeff, A.~W. Hall, Z.~Yuan, S.~Zhou, J.~Panerati, and A.~P.
  Schoellig, ``Safe learning in robotics: From learning-based control to safe
  reinforcement learning,'' \emph{Annual Review of Control, Robotics, and
  Autonomous Systems}, vol.~5, pp. 411--444, 2022.

\bibitem{sui2015safe}
Y.~Sui, A.~Gotovos, J.~Burdick, and A.~Krause, ``Safe exploration for
  optimization with {G}aussian processes,'' in \emph{International Conference
  on Machine Learning}, 2015, pp. 997--1005.

\bibitem{berkenkamp2021bayesian}
F.~Berkenkamp, A.~Krause, and A.~P. Schoellig, ``Bayesian optimization with
  safety constraints: safe and automatic parameter tuning in robotics,''
  \emph{Machine Learning}, pp. 1--35, 2021.

\bibitem{williams2006gaussian}
C.~K. Williams and C.~E. Rasmussen, \emph{Gaussian Processes for Machine
  Learning}.\hskip 1em plus 0.5em minus 0.4em\relax MIT press Cambridge, MA,
  2006.

\bibitem{nadaraya1964estimating}
E.~A. Nadaraya, ``On estimating regression,'' \emph{Theory of Probability \&
  Its Applications}, vol.~9, no.~1, pp. 141--142, 1964.

\bibitem{watson1964smooth}
G.~S. Watson, ``Smooth regression analysis,'' \emph{Sankhy{\=a}: The Indian
  Journal of Statistics, Series A}, pp. 359--372, 1964.

\bibitem{rao1996pac}
N.~S. Rao and V.~A. Protopopescu, ``On {PAC} learning of functions with
  smoothness properties using feedforward sigmoidal networks,'' \emph{Proc.
  {IEEE}}, vol.~84, no.~10, pp. 1562--1569, 1996.

\bibitem{baumann2021gosafe}
D.~Baumann, A.~Marco, M.~Turchetta, and S.~Trimpe, ``Go{S}afe: Globally optimal
  safe robot learning,'' in \emph{IEEE International Conference on Robotics and
  Automation}, 2021, pp. 4452--4458.

\bibitem{sukhija2023gosafeopt}
B.~Sukhija, M.~Turchetta, D.~Lindner, A.~Krause, S.~Trimpe, and D.~Baumann,
  ``Go{S}afe{O}pt: Scalable safe exploration for global optimization of
  dynamical systems,'' \emph{Artificial Intelligence}, vol. 320, p. 103922,
  2023.

\bibitem{konig2021applied}
C.~König, M.~Turchetta, J.~Lygeros, A.~Rupenyan, and A.~Krause, ``Safe and
  efficient model-free adaptive control via {B}ayesian optimization,'' in
  \emph{IEEE International Conference on Robotics and Automation}, 2021, pp.
  9782--9788.

\bibitem{wachi2018safe}
A.~Wachi, Y.~Sui, Y.~Yue, and M.~Ono, ``Safe exploration and optimization of
  constrained {MDP}s using {G}aussian processes,'' in \emph{AAAI Conference on
  Artificial Intelligence}, 2018, pp. 6548--6555.

\bibitem{duivenvoorden2017constrained}
R.~R. Duivenvoorden, F.~Berkenkamp, N.~Carion, A.~Krause, and A.~P. Schoellig,
  ``Constrained {B}ayesian optimization with particle swarms for safe adaptive
  controller tuning,'' in \emph{IFAC World Congress}, 2017, pp.
  11\,800--11\,807.

\bibitem{sui2018stagewise}
Y.~Sui, V.~Zhuang, J.~Burdick, and Y.~Yue, ``Stagewise safe {B}ayesian
  optimization with {G}aussian processes,'' in \emph{International Conference
  on Machine Learning}, 2018, pp. 4781--4789.

\bibitem{schuster1979contributions}
E.~Schuster and S.~Yakowitz, ``Contributions to the theory of nonparametric
  regression, with application to system identification,'' \emph{The Annals of
  Statistics}, pp. 139--149, 1979.

\bibitem{juditsky1995nonlinear}
A.~Juditsky, H.~Hjalmarsson, A.~Benveniste, B.~Delyon, L.~Ljung,
  J.~Sj{\"o}berg, and Q.~Zhang, ``Nonlinear black-box models in system
  identification: Mathematical foundations,'' \emph{Automatica}, vol.~31,
  no.~12, pp. 1725--1750, 1995.

\bibitem{ljung2006some}
L.~Ljung, ``Some aspects on nonlinear system identification,'' in \emph{IFAC
  Symposium on Identification and System Parameter Estimation}, 2006, pp.
  553--564.

\bibitem{mzyk2020wiener}
G.~Mzyk and P.~Wachel, ``Wiener system identification by input injection
  method,'' \emph{International Journal of Adaptive Control and Signal
  Processing}, vol.~34, no.~8, pp. 1105--1119, 2020.

\bibitem{wood1996estimation}
G.~Wood and B.~Zhang, ``Estimation of the {L}ipschitz constant of a function,''
  \emph{Journal of Global Optimization}, vol.~8, pp. 91--103, 1996.

\bibitem{abbasi:2011}
Y.~Abbasi-Yadkori, D.~P{\'a}l, and C.~Szepesv{\'a}ri, ``Online least squares
  estimation with self-normalized processes: An application to bandit
  problems,'' \emph{arXiv preprint arXiv:1102.2670}, 2011.

\bibitem{wachel2022decentralized}
P.~Wachel, K.~Kowalczyk, and C.~R. Rojas, ``Decentralized diffusion-based
  learning under non-parametric limited prior knowledge,'' \emph{European
  Journal of Control}, 2022, under review.

\bibitem{dean2021certainty}
S.~Dean and B.~Recht, ``Certainty equivalent perception-based control,'' in
  \emph{Learning for Dynamics and Control}, 2021, pp. 399--411.

\bibitem{srinivas2012gaussian}
N.~Srinivas, A.~Krause, S.~M. Kakade, and M.~Seeger, ``Gaussian process
  optimization in the bandit setting: No regret and experimental design,''
  \emph{{IEEE} Trans. Inform. Theory}, vol.~58, no.~5, 2012.

\end{thebibliography}

\end{document}